\documentclass[12pt]{article}
\begin{document}

\setlength{\textheight}{8.8in}
\setlength{\textwidth}{6.5in}
\setlength{\evensidemargin}{-0.18in}
\setlength{\oddsidemargin}{-0.18in}
\setlength{\headheight}{0in}
\setlength{\headsep}{10pt}
\setlength{\topsep}{0in}
\setlength{\topmargin}{0.0in}
\setlength{\itemsep}{0in}      
\renewcommand{\baselinestretch}{1.2}
\parskip=0.080in

\newcommand{\qedsymb}{\hfill{\rule{2mm}{2mm}}}
\def\squarebox#1{\hbox to #1{\hfill\vbox to #1{\vfill}}}
\newcommand{\qed}{\hspace*{\fill}
        \vbox{\hrule\hbox{\vrule\squarebox{.667em}\vrule}\hrule}\smallskip}
\newenvironment{proof}{\begin{trivlist}
\item[\hspace{\labelsep}{\em\noindent Proof: }]
}{\qed\end{trivlist}}

\newtheorem{theorem}{Theorem}[section]
\newtheorem{lemma}[theorem]{Lemma}
\newtheorem{claim}[theorem]{Claim}
\newtheorem{corollary}[theorem]{Corollary}
\newtheorem{proposition}[theorem]{Proposition}
\newtheorem{observation}[theorem]{Observation}
\newtheorem{defin}[theorem]{Definition}
\newtheorem{definition}[theorem]{Definition}
\newtheorem{example}[theorem]{Example}
\newtheorem{conjecture}[theorem]{Conjecture}

\title{Edge Covering with Budget Constrains
}
\author{
Rajiv Gandhi\thanks{Department of Computer Science,
Rutgers University, Camden, NJ 08102.
Partially supported by NSF grant number 1218620
.
E-mail: {\bf rajivg@camden.rutgers.edu}}
\and
Guy Kortsarz\thanks{Department of Computer Science,
Rutgers University, Camden, NJ 08102.
Partially supported by NSF grant number 1218620.
E-mail: {\bf guyk@camden.rutgers.edu}.}}


\newcommand{\lab}{\ell}
\newcommand{\iter}{\mathit{iter}}

\newcommand{\kminL}{{\small \sf Fixed cost minimum edge cover }}
\newcommand{\mwmsL}{{\small \sf Maximum weight $m'$-edge cover }}
\newcommand{\kmin}{{\small \sf FCEC }}
\newcommand{\mwms}{{\small \sf MWEC }}
\newcommand{\dks}{{\small \sf Dense $k$-subgraph }}
\newcommand{\dap}{{\small \sf Degrees density augmentation }}

\maketitle

\begin{abstract}
We study two related problems: the \mwmsL (\mwms) problem and the
\kminL (\kmin)
problem. In the \mwms problem, 
we are given an undirected 
simple graph $G=(V,E)$ with integral vertex weights.
The goal is to select a set $U\subseteq V$ of maximum weight
so that the number of edges with at least one endpoint 
in $U$ is at most $m'$. 
Goldschmidt and Hochbaum \cite{GH97} show that the problem is NP-hard
and they give a $3$-approximation algorithm for the problem. 
We present an approximation algorithm that achieves a guarantee of
$2$, thereby improving the bound of $3$ \cite{GH97}.
In the \kmin problem, we are given a vertex weighted graph, a bound
$k$, and our goal is to find a subset of vertices $U$ of total weight
at least $k$ 
such that the number of edges with at least one 
edges in in $U$ is minimized.
A $2(1+\epsilon)$-approximation for the problem follows from the work of
Carnes and Shmoys \cite{CS08}.
We improve the approximation ratio by giving a $2$-approximation
algorithm for the problem.  
Can we get better results using methods based
on linear programming? We take a first step and show that 
the natural LP for \kmin has an integrality gap of $2-o(1)$.
We improve the NP-completeness result for \mwms \cite{GH97} by showing
that unless a well-known instance of the \dks admits a
constant ratio, \kmin and \mwms do not admit a PTAS. Note that the best
approximation guarantee known for this instance of \dks is
$O(n^{2/9})$ \cite{BCCFV}.
We show that for any constant $\rho>1$, an approximation guarantee of
$\rho$ for the \kmin problem implies 
a $\rho(1+o(1))$ approximation for \mwms.
Finally, we define
the \dap problem which is the density version of the \kmin problem.
In this problem we are given an undirected graph $G=(V,E)$ and 
a set $U\subseteq V$.
The objective is to 
find a set $W$ so that $(e(W)+e(U,W))/deg(W)$ is maximum.
This problem admits an LP-based exact solution \cite{CMNRS}. 
We give a combinatorial algorithm for this problem. 
\end{abstract}

\section{Introduction}
\label{sec:intro}
We study two natural budgeted edge covering problems in undirected
simple graphs with integral weights on vertices.
The budget is given either as a bound on the number of edges to be
covered or as a bound on the total
weight of the vertices. We say that an edge $e$ is \textit{touched} by a set of
vertices $U$ or that $e$ \textit{touches} the set of vertices $U$, if at least
one of $e$'s endpoints is in $U$.
Specifically, the problems that we study are as follows. 
The \mwmsL (\mwms) problem that we study was first introduced by Goldschmidt
and Hochbaum \cite{GH97}. In this problem, we are given an undirected 
simple graph $G=(V,E)$ with integral vertex weights.
The goal is to select a subset $U\subseteq V$ of maximum weight
so that the number of edges touching
$U$ is at most $m'$. 
This problem is 
motivated by application in loading of semi-conductor components to be
assembled into products \cite{GH97}.

We also study the closely related \kminL (\kmin) problem in which
given a graph $G=(V,E)$ with vertex weights and 
a number $W$, our goal is to 
find $U\subseteq V$ of weight at least $W$ such that the number 
of edges touching $U$ is minimized.

Finally, we study 
the \dap problem which is the density version of the \kmin problem.
In the \dap problem, we are given an
undirected graph graph $G=(V,E)$ and 
a set $U\subseteq V$ and our goal is to  
find a set $W$ with maximum augmenting density 
i.e., a set $W$ that maximizes $(e(W)+e(U,W))/deg(W)$.

\subsection{Related Work}
Goldschmidt and Hochbaum \cite{GH97} introduced the \mwms problem. 
They show that the problem is NP-complete and give algorithms that yield
$2$-approximate and $3$-approximate algorithm for
the unweighted and the weighted versions of the problem,
respectively.

A class of related problems are the density problems -- problems in
which we are to find a subgraph and the objective function considers
the ratio of the total number or weight of edges in the subgraph
to the number of vertices in the subgraph. A well known problem in
this class is the \dks problem ($DkS$) in which we want to find a subset of
vertices $U$ of size $k$ such that the total number of edges
in the subgraph induced by $U$ is maximized.
The best ratio known for the problem is 
$n^{1/4+\epsilon}$ \cite{FKP,BCCFV}, which is an improvement over the
bound of $O(n^{1/3-\epsilon})$, for $\epsilon$ close to $1/60$
\cite{FKP}. The \dks problem is
APX-hard under the assumption that 
NP problems can not be solved in subexponential time
\cite{K06}. Interestingly, if there is no bound on the the size of
$U$ then the problem can be solved in polynomial time \cite{Lawler,G84}.

Consider an objective function in which we minimize $deg(U)$. 
One can associate a cost $c_u=deg(u)$ with each vertex $u$ and
a size $s_u=w(u)$ for each vertex $u$, and then the objective is just
to minimize $deg(U)$ subject to $\sum s_ux_u \geq k$. Carnes
and Shmoys \cite{CS08} give a $(1+\epsilon)$-approximation for the
problem. Using this result and the observation that the 
objective function is at most a factor of $2$ away from the objective
function for the \kmin problem, a $2(1+\epsilon)$-approximation
follows for the \kmin problem.

Variations of the \dks problem in which the size of $U$ is at least
$k$ ($Dalk$) and the size of $U$ is at most $k$ ($Damk$) have been studied
\cite{AC09,KS09}. In \cite{AC09, KS09}, they give evidence that $Damk$
is just as hard as $DkS$. They also give $2$-approximate solutions to
the $Dalk$ problem. In \cite{KS09}, they also consider the density
versions of the problems in directed graphs. Gajewar and Sarma \cite{GS12}
consider a generalization in which we are give a partition of vertices
$U_1, U_2, \ldots, U_t$, and non-negative integers $r_1, r_2, \ldots,
r_t$. the goal is to find a densest subgraph such that partition $U_i$
contributes at least $r_i$ vertices to the densest subgraph. They give
a $3$-approximation for the problem, which was improved to $2$ by
Chakravarthy et al. \cite{CMNRS}, who also consider other
generalizations. They also show using linear programming that the \dap
problem can be solved optimally.
 
A problem parameterized by $k$ 
is Fixed Parameter Tractable \cite{dany}, if it 
admits an exact algorithm with running time of $f(k)\cdot n^{O(1)}$.
The function $f$ can be exponential in $k$ or larger.
Proving that a problem is W[1]-hard (with respect to parameter $k$)
is a strong indication that it has no FPT algorithm with parameter $k$
(similar to NP-hardness implying the likelihood of no polynomial time algorithm).
The {\kmin} problem parameterized by $k$ is W[1] hard 
but admits a $f(k,\epsilon)\cdot n^{O(1)}$ time,
$(1+\epsilon)$-approximation, for any constant $\epsilon>0$ \cite{dany}.
This is in contrast to our result that shows that it is highly
unlikely that \kmin admits
a polynomial time approximation scheme (PTAS), if the running time is bounded by a polynomial in $k$.

\subsection{Preliminaries}
The input is an undirected simple graph $G=(V,E)$ and vertex weights
are given by $w(\cdot)$. Let $n=|V|$ and $m=|E|$.
For any subset $S\subseteq V$, let $\overline{S} = V\setminus S$. Let
$e(P,Q)$ be the set of edges with one endpoint in $P$ and the other in
$Q$. Let $deg(S)$ denote the sum of degrees of all vertices in $S$,
i.e., $deg(S)=\sum_{v\in S}deg(v)$. Let $deg_H(v)$ denote the number
of neighbors of $v$ among the vertices in $H$. Let $deg_H(S)$ denote
the quantity $\sum_{v\in S}deg_H(v)$. We use $OPT$ to denote an
optimal solution as well as the cost of an optimal solution. The
meaning will be clear from the context in which it is used.

For set $U\subseteq V$, let  
$T(U)$ be the collection of all edges with at least one endpoint in $U$.
Namely, is the set of edges touching $U$.
We denote $t(U)=|T(U)|$.
The set of edges with both endpoints in $U$, also called {\em internal} edges 
of $U$, is denoted by $E(U)$.
We denote $e(U)=|E(U)|$.
We denote by $e(X,Y)$ the number of edges with one endpoint in 
$X$ and one in $Y$. Let $e_{U}(X,Y)$ be the number of edges 
between $X\cap U$ and $Y\cap U$ in the graph $G(U)$ induced 
by $U$.

\begin{lemma} 
\label{lemma:klowest}
The \kmin problem admits a simple $2$-approximate solution in case of
uniform vertex weights.
\end{lemma}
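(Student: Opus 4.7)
The plan is to use the greedy algorithm that sorts vertices by degree and picks the $k$ smallest, where $k=W$ is the required count in the uniform-weight case. First I would argue we may assume an optimal solution has exactly $k$ vertices: since touching edges is a monotone set function, any larger set can be shrunk to size $k$ without increasing $t(\cdot)$, so $|OPT|=k$ without loss of generality. Let $U$ denote the set of the $k$ vertices of minimum degree in $G$, output by the algorithm.

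The analysis rests on the identity $t(S)=\deg(S)-e(S)$ for any $S\subseteq V$, combined with the elementary bound $2e(S)\le \deg(S)$ (each internal edge contributes $2$ to the degree sum restricted to $S$). Chaining these gives $t(S)\ge \deg(S)/2$, i.e., $\deg(OPT)\le 2\,t(OPT)=2\cdot OPT$.

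On the algorithm side, since $|U|=k=|OPT|$ and $U$ consists of the $k$ globally smallest-degree vertices, the sum-of-degrees is minimized over all $k$-subsets, so $\deg(U)\le \deg(OPT)$. Combining, $t(U)\le \deg(U)\le \deg(OPT)\le 2\cdot OPT$, which is the desired $2$-approximation.

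There is really no substantive obstacle; the only thing to be a bit careful about is the WLOG step (which uses only that $t(\cdot)$ is monotone under set inclusion) and the trivial but crucial fact $2e(S)\le \deg(S)$, which is what forces the ratio $2$ rather than something better. The construction is combinatorial and requires neither LP rounding nor iterative arguments, so the proof is essentially a two-line inequality chain after setting up the notation.
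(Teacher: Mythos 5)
Your proposal is correct and follows essentially the same route as the paper: take the $k$ lowest-degree vertices $Z$, bound $t(Z)\leq deg(Z)\leq deg(OPT)$, and use $t(OPT)\geq deg(OPT)/2$ to conclude. The paper phrases the middle step via the average degree $b$ of $Z$, but the inequality chain is the same.
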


\begin{proof}
Let $Z$ be the set of $k$ lowest degree vertices in $G$. The set $Z$
is a $2$-approximate solution. Why? Let $b$ be the average degree of
vertices in $Z$. Thus $t(Z) \leq bk$. The claim follows since $t(OPT) \geq
deg(OPT)/2 \geq bk/2$.  
\end{proof}

From Lemma \ref{lemma:klowest}, if $deg(OPT)\geq bk(1+\epsilon)$, we
obtain a $2/(1+\epsilon)<2$ approximation guarantee using the set $Z$
as the solution.
Henceforth we assume that $deg(OPT)<bk(1+\epsilon)$

\begin{claim}
For every set $U$, $t(U)=deg(U)-e(U)$
\end{claim}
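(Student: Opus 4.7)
The plan is to prove the identity by a simple double-counting argument on the endpoints of edges incident to $U$.

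First I would partition the edges in $T(U)$ into two disjoint classes: internal edges, whose both endpoints lie in $U$ (there are $e(U)$ of these by definition), and crossing edges, which have exactly one endpoint in $U$ and one in $\overline{U}$. Since every edge touching $U$ falls into exactly one of these classes,
\[
 t(U) = e(U) + e(U,\overline{U}).
\]

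Next I would evaluate $deg(U) = \sum_{v \in U} deg(v)$ by interchanging the summation and counting, for each edge incident to $U$, the number of its endpoints that lie in $U$. An internal edge contributes $2$ to this sum and a crossing edge contributes $1$, so
\[
 deg(U) = 2\, e(U) + e(U,\overline{U}).
\]

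Subtracting the first displayed equation from the second eliminates $e(U,\overline{U})$ and yields $deg(U) - t(U) = e(U)$, which rearranges to the claimed identity $t(U) = deg(U) - e(U)$. There is no real obstacle here; the only point to be careful about is that we are working with a simple graph, so each edge is counted cleanly and the contribution of each edge to $deg(U)$ is indeed either $1$ or $2$ depending on how many of its endpoints lie in $U$.
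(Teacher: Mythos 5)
Your proof is correct and follows essentially the same route as the paper: both decompose the touching edges into internal and crossing edges and observe that internal edges are counted twice in $deg(U)$ while crossing edges are counted once. Your version is just slightly more explicit in writing out the two displayed identities before subtracting.
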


\begin{proof}
Consider separately the edges 
$E(U,V\setminus U)$ and $E(U)$.
Note that the edges $E(U,V\setminus U)$ are counted once 
in the sum of degrees, but edges in $E(U)$ are counted twice.
Thus in order to get the number of edges touching $U$,
we need to subtract $e(U)$ from $deg(U)$. 
\end{proof}

\subsection{Our results}

Our contributions in this paper are as follows.
\begin{itemize}
\item
For the \mwms problem we
give an algorithm that yields an approximation guarantee of $2$, thereby improving
the guarantee of $3$ given by Goldschmidt and Hochbaum \cite{GH97}. 

\item
We give a $2$-approximate solution to the 
\kmin problem. This improves the $2(1+\epsilon)$-ratio that follows
from the work of Carnes and Shmoys \cite{CS08}.

\item
Can linear programming 
be used to improve the ratio of $2$ for \kmin and \mwms problems?
We take a first step and show that a natural LP for 
{\kmin} has an integrality gap of $2(1-o(1))$, even for the unweighted case.

\item
We show that unless a well-known instance of the \dks admits a
constant ratio, \kmin and \mwms do not admit a PTAS. Note that the best
approximation guarantee known for this instance of \dks is
$O(n^{2/9})$ \cite{BCCFV}. This gives a stronger hardness result than
the NP-completeness result known for \mwms \cite{GH97}. 

\item
For any constant $\rho>1$, we show that if {\kmin} admits a
$\rho$-approximation algorithm then 
{\mwms} admits a $\rho(1+o(1))$-approximation algorithm. 

\item
We give a combinatorial algorithm that solves the {\dap}
problem optimally. 

\end{itemize}

\section{A 2-approximation for Maximum Weight
  $m'$-Edge Cover}
In this section we give a dynamic programming based solution for
the \mwms problem. The idea of using dynamic programming in this
context was first proposed by Goldschimdt and Hochbaum \cite{GH97}.
Recall that in the \mwms problem, we are 
given an undirected simple graph $G=(V,E)$ with integral vertex weights.
The goal is to select a subset $U\subseteq V$ of maximum weight
so that the number of edges touching
$U$ is at most $m'$.
 
We will guess the following entities (by trying all possibilities) and
for each guess, we use dynamic programming to solve the
problem. 
\begin{enumerate}
\item
$H^* = \{v_h\}$, where $v_h$ is the heaviest vertex in an optimal solution.
\item
$P_{H^*} = e(H^*, OPT\setminus H^*)$ -- the number of neighbors of $v_h$ in the
  optimal solution. There are at most $n$ possibilities.
\item
$D_{H^*} = deg_{\overline{H}^*}(OPT\setminus H^*)$: total degree of
  vertices in $OPT\setminus H^*$ in the graph induced by vertices in
  $V\setminus H^*$. There are at most $n^2$ possibilities.
\end{enumerate}

We will try all combinations of the above entities. Since there are at
most polynomial number of possibilities for each entity, we have at
most polynomial number of possibilities in total.
We define the following subproblems as part of our dynamic
programming solution.  Let $H$ be a guess for the singleton set $H^*$
that contains the heaviest vertex in an optimal solution. Let
$\{v_1,v_2, \ldots, v_{n-1}\}$ be the vertices in 
$\overline{H}$ (recall $\overline{H}=V\setminus H$). 
Then, for any $H$, we solve the following subproblems.
\begin{quote}
$A[H, i, P_{H}, D_{H}]$ denote the maximum weighted subset
$Q\subseteq \{v_1, v_2, \ldots, v_i\}$ such that $e(H,Q)\geq P_{H}$
and $deg_{\overline{H}}(Q)\leq D_{H}/2$.
\end{quote}
Note that while the natural bound on $deg_{\overline{H}}(Q)$ is $D_H$,
using such a bound will lead to an infeasible solution.
For fixed parameters $H$, $P_{H}$, and $D_{H}$, we are interested in
$A[H,n-1,P_H,D_H]$. We use the following recurrence as the basis for
our dynamic programming solution: the value of $A[H,i,P_{H},D_{H}]
= -\infty$ in any of the following three cases -- (i) $i=0$ and $P_{H}>0$, (ii) $i=0$
and $D_{H} < 0$, and (iii)  $D_{H}/2 > m'-e(H,\overline{H})$.
When $i=0$, $P_{H}\leq 0$ and $D_{H}\geq 0$, the value of
$A[H,i,P_{H},D_{H}] = 0$. Otherwise, we have   
\begin{eqnarray*}
A[H,i,P_{H},D_{H}] & = \max\{A[H,i-1,P_{H},D_{H}], w(v_i)+A[H,i-1,P'_{H},D'_{H}]\} 
\end{eqnarray*}
where, $P'_{H} = P_{H}-deg_{H}(v_i)$ and $D'_{H} = D_{H} - 2(deg_{\overline{H}}(v_i))$. 
Our solution is given by 
$\max_{H,P_{H},D_{H}}\{w(H)+A[H,n-1, P_{H}, D_{H}]\}$. 

\subsection*{Analysis}
\begin{lemma}
Our algorithm yields a feasible solution.
\end{lemma}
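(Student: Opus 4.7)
The plan is to take any non-$-\infty$ entry $A[H,n-1,P_H,D_H]$ realized by a set $Q\subseteq\overline{H}$, and verify that the output $U=H\cup Q$ satisfies $t(U)\le m'$. The two key invariants I will extract from the recurrence are: (i) $deg_{\overline{H}}(Q)\le D_H/2$, which follows because each inclusion of $v_i$ subtracts exactly $2\,deg_{\overline{H}}(v_i)$ from $D_H$ and the base case insists $D_H\ge 0$; and (ii) $D_H/2\le m'-e(H,\overline{H})$, which follows from the rule that the table entry is $-\infty$ whenever $D_H/2>m'-e(H,\overline{H})$.

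Next I would use the earlier claim $t(U)=deg(U)-e(U)$ and rewrite the right-hand side in terms of quantities that the DP controls. Since $|H|=1$, we have $e(H)=0$ and $deg(H)=e(H,\overline{H})$. For $Q\subseteq\overline{H}$, the identity $deg(Q)=deg_{\overline{H}}(Q)+e(Q,H)$ holds because every edge from $Q$ either stays inside $\overline{H}$ (counted in $deg_{\overline{H}}(Q)$) or goes to $H$. Also $e(H\cup Q)=e(H,Q)+e(Q)$. Combining these,
\[
t(H\cup Q)=e(H,\overline{H})+deg_{\overline{H}}(Q)+e(Q,H)-e(H,Q)-e(Q)=e(H,\overline{H})+deg_{\overline{H}}(Q)-e(Q).
\]
The cross term $e(Q,H)=e(H,Q)$ cancels, and we are left with a simple expression that the DP is actually tracking (up to the slack $-e(Q)$).

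To finish, drop the non-positive $-e(Q)$ term and apply the two invariants in order:
\[
t(H\cup Q)\;\le\;e(H,\overline{H})+deg_{\overline{H}}(Q)\;\le\;e(H,\overline{H})+D_H/2\;\le\;e(H,\overline{H})+\bigl(m'-e(H,\overline{H})\bigr)\;=\;m',
\]
proving feasibility. The output of the algorithm is a maximum over such $H\cup Q$ (plus the case where $Q$ is empty, which is trivially feasible since we still have $t(H)=deg(v_h)=e(H,\overline{H})\le m'$ by invariant (ii)), so feasibility is preserved.

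The main subtlety, and the place I expect the most scrutiny, is the factor of $2$ in the decrement $D_H'=D_H-2\,deg_{\overline{H}}(v_i)$ and in the inequality $deg_{\overline{H}}(Q)\le D_H/2$. The DP cannot efficiently track the internal edges $e(Q)$ as vertices are added one by one, so it effectively upper-bounds $t_{\overline{H}}(Q)=deg_{\overline{H}}(Q)-e(Q)$ by $deg_{\overline{H}}(Q)$; the factor of $2$ compensates because each internal edge of $Q$ contributes twice to $deg_{\overline{H}}(Q)$ but only once to the true count of edges touched in $\overline{H}$. This loss is exactly what the parenthetical remark in the algorithm is pointing at, and making the argument above explicit is how the feasibility proof turns that remark into a theorem.
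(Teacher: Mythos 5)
Your proof is correct and follows essentially the same route as the paper's: both decompose $t(H\cup Q)$ into $e(H,\overline{H})$ plus the edges touched by $Q$ inside $\overline{H}$, bound the latter by $deg_{\overline{H}}(Q)\le D_H/2$ via the recurrence, and close with the base-case rule $D_H/2\le m'-e(H,\overline{H})$. The only cosmetic difference is that you reach the decomposition through the identity $t(U)=deg(U)-e(U)$ and explicitly cancel the cross terms, whereas the paper writes the edge count directly as $e(H',\overline{H'})+e(Q',\overline{H'})$; your added remarks on the empty-$Q$ case and the role of the factor $2$ are accurate but not needed beyond what the paper already argues.
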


\begin{proof}
Let $H'\cup Q'$, where $Q'\subseteq V\setminus H'$, be the set of
vertices returned by our solution. The number of edges with at least
one endpoint in $H'\cup Q'$, is 
\begin{eqnarray*}
& = e(H', \overline{H}') + e(Q', \overline{H}') \\
& \leq  e(H', \overline{H}') +  deg_{\overline{H'}}(Q')\\
& \leq  e(H', \overline{H}') +  \frac{D_{H'}}{2}\\
& \leq e(H', \overline{H'}) +
  (m'-e(H',\overline{H'})) \hspace*{1cm} (\mbox{using the
    base case})\\
& = m'
\end{eqnarray*}
\end{proof}

\begin{lemma}
The above algorithm results in a $2$-approximate solution.
\end{lemma}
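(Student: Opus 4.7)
Plan:

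I would prove the approximation guarantee by focusing on the guess $(H,P_H,D_H) = (H^*, P_{H^*}, D_{H^*})$, where $H^* = \{v_h\}$ contains the heaviest vertex $v_h$ of some fixed optimal solution $U^* = H^*\cup Q^*$, $P_{H^*} = e(H^*,Q^*)$, and $D_{H^*} = deg_{\overline{H^*}}(Q^*)$. Since the algorithm iterates over all triples and returns the best, it suffices to exhibit one triple for which $w(H) + A[H,n-1,P_H,D_H] \ge OPT/2$.

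First I would verify that this triple is not rejected by the validity condition $D_{H^*}/2 \le m' - e(H^*,\overline{H^*})$. Applying the earlier Claim $t(U)=deg(U)-e(U)$ to $U^*$ gives $t(U^*) = e(H^*,\overline{H^*}) + D_{H^*} - e(Q^*)$; combining with $t(U^*)\le m'$ and the elementary bound $e(Q^*)\le D_{H^*}/2$ (each internal edge of $Q^*$ contributes twice to $deg_{\overline{H^*}}(Q^*)$), the needed inequality follows, so the DP entry is finite.

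The core step is to exhibit a subset $Q \subseteq V\setminus H^*$ that is DP-feasible with $w(Q) \ge w(Q^*)/2$, since then $w(H^*) + A[H^*,n-1,P_{H^*},D_{H^*}] \ge w(H^*) + w(Q^*)/2 \ge OPT/2$. My candidate is a subset of $Q^*$ that contains $N^* := N(v_h)\cap Q^*$ (to force $e(H^*,Q)\ge P_{H^*}$) and omits enough vertices of $Q^*\setminus N^*$ to fit under the halved degree ceiling. I would cast the remaining choice as a knapsack-style selection over $Q^*\setminus N^*$ with budget $D_{H^*}/2 - deg_{\overline{H^*}}(N^*)$, solve its natural LP relaxation (the fractional solution $x_v=1/2$ on $Q^*\setminus N^*$ is feasible whenever $deg_{\overline{H^*}}(N^*)\le D_{H^*}/2$), and round it, absorbing the integrality loss into the free term $w(H^*)\ge \max_{v\in Q^*}w(v)$.

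The main obstacle is the regime $deg_{\overline{H^*}}(N^*) > D_{H^*}/2$, in which no subset of $Q^*$ containing $N^*$ is DP-feasible. I would handle this by a case split: in this regime $v_h$'s neighbors inside $Q^*$ are collectively ``expensive'' in degree, so leveraging $w(v_h) \ge w(v)$ for every $v\in Q^*$ one can show that either the singleton $\{v_h\}$ (considered by the DP via $P_H=D_H=0$) or the pair $\{v_h,v^*\}$ with $v^*$ the heaviest vertex of $Q^*$ (considered via $P_H = [v^*\sim v_h]$ and $D_H = 2\,deg_{\overline{H^*}}(v^*)$) already attains weight at least $OPT/2$. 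Combining the two regimes yields the $2$-approximation.
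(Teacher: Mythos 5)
Your high-level plan (fix the correct guess, exhibit a DP-feasible $Q\subseteq Q^*$ with $w(Q)\ge w(Q^*)/2$) is the right one, but the construction you give does not produce such a $Q$, and there are three concrete failures. First, the LP feasibility claim is arithmetically false: writing $a=deg_{\overline{H^*}}(N^*)$ and $b=deg_{\overline{H^*}}(Q^*\setminus N^*)$, the budget left for $S\subseteq Q^*\setminus N^*$ after forcing $N^*$ in is $D_{H^*}/2-a=(b-a)/2$, while the half-fractional solution $x_v=1/2$ consumes $b/2$; these are compatible only when $a=0$, not merely when $a\le D_{H^*}/2$. Forcing $N^*$ in charges its degree at full price against a budget that is only half of $deg_{\overline{H^*}}(Q^*)$, and your analysis never pays for that. (The constraint $e(H,Q)\ge P_H$ that motivates including $N^*$ is a red herring: the algorithm tries all values of $P_H$, so the analysis may work with the guess $P_H=0$ and drop the containment requirement entirely.) Second, even where the LP is feasible, rounding a basic solution loses one whole vertex, so you only get $2w(Q)\ge w(Q^*)-2w(v_{\mathrm{frac}})$; charging $2w(v_{\mathrm{frac}})\le 2w(H^*)$ against the single available copy of $w(H^*)$ in $OPT\le 2(w(H^*)+w(Q))$ leaves a shortfall, and the guarantee degrades to $ALG\ge\max(w(H^*),w(Q^*)/2)\ge OPT/3$ in the worst case. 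Third, the fallback for the regime $a>D_{H^*}/2$ is false: let $N^*$ be a single low-weight vertex of huge degree and let $Q^*\setminus N^*$ consist of many degree-one vertices each of weight $w(v_h)$; then $a>b$, yet neither $\{v_h\}$ nor $\{v_h,v^*\}$ comes anywhere near $OPT/2$.

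The paper's proof avoids all of this with one clean device: order $Q^*$ arbitrarily, let $p$ be the first index with $deg_{\overline{H^*}}(Q^*_p)>D_{H^*}/2$, and observe that both the prefix $Q^*_{p-1}$ and the suffix $Q^*\setminus Q^*_p$ have degree at most $D_{H^*}/2$ and hence are both DP-feasible. Therefore $2w(Q)\ge w(Q^*_{p-1})+w(Q^*\setminus Q^*_p)=w(Q^*)-w(v_p)\ge w(Q^*)-w(H^*)$, giving $OPT=w(H^*)+w(Q^*)\le 2w(H^*)+2w(Q)$. The single dropped vertex is lost once, \emph{before} the doubling, which is exactly the accounting your LP-rounding version cannot reproduce. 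If you want to keep your framework, replace the ``force $N^*$ in, then knapsack'' step with this prefix/suffix split applied to all of $Q^*$.
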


\begin{proof}
Recall that $H^*$ consists of the highest degree vertex in the
optimal solution. Let $Q^*$ be the remaining vertices in the
optimal solution. Consider the scenario when our algorithm makes the
correct guess for $H^*$.
Let $Q\subseteq \overline{H^*}$ be the solution returned by the dynamic
program in this setting. We know that 
\[
deg_{\overline{H}^*}(Q) \leq \frac{deg_{\overline{H}^*}(Q^*)}{2}
\]
We now use ideas from \cite{GH97} to show that $w(H^*\cup Q)\geq
2w(H^*\cup Q^*)$. Recall that $H'\cup Q'$ be the output of our
algorithm. Since $w(H'\cup Q')\geq w(H^*\cup Q^*)$,
it follows that our solution is a factor of at most $2$
away from $OPT$. 

Consider any arbitrary ordering of vertices $v_1, v_2, \ldots$
in $Q^*$. Note that the weight of each vertex in $Q^*$ is at most
$w(H^*)$. 
Let $Q^*_r$ denote the the first $r$ vertices in the above
ordering of vertices of $Q^*$. Let $p$ be the first index such that
$deg_{\overline{H}^*}(Q^*_p) > deg_{\overline{H}^*}(Q^*)/2$. This
implies the following -- (i) $deg_{\overline{H}^*}(Q^*_{p-1})\leq
deg_{\overline{H}^*}(Q^*)/2$, and (ii)
$deg_{\overline{H}^*}(Q^*\setminus Q^*_p)<
deg_{\overline{H}^*}(Q^*)/2$. Note that both the sets $Q^*_{p-1}$ and
$Q^*\setminus Q^*_p$ (neither set contains $v_p$) are feasible candidates for the set $Q$, the solution
returned by our algorithm when the heaviest vertex set was chosen to
be $H^*$. Since $w(Q)\geq w(Q^*_{p-1})$, $w(Q)\geq w(Q^*\setminus Q^*_{p})$, and
$w(v_p)\leq w(H^*)$, we have
\begin{eqnarray*}
w(OPT) & \leq w(H^*\cup Q^*) \\
       & \leq w(H^*) + w(Q^*) \\
       & \leq w(H^*) + w(Q^*_{p-1}) + w(v_p) + w(Q^*\setminus
Q^*_{p})\\
       & \leq w(H^*) + w(Q) + w(H^*) + w(Q) \\
       & = 2 w(H^* \cup Q) \\
       & \leq 2w(H'\cup Q')
\end{eqnarray*}
\end{proof}

\section{A 2-approximation for Fixed Weight Minimum Edge Cover}
Recall the \kmin problem: Given a graph $G=(V,E)$
with arbitrary vertex weights and a positive integer $W$, our objective
is to choose a set $S\subseteq V$ of vertices of total weight at least
$W$ such that that the number of edges with at least one end point in
$S$ is minimized. 

We will solve the following related problem optimally and then show
that an optimal solution to the problem is a 2-approximation to
\kmin: we want to find a subset $S$
of vertices such that $deg(S)$ is smallest and $w(S)$ is at least
$W$. 

We use the dynamic programming algorithm of the well-known Knapsack
problem to find a solution to the above problem. For completeness, we
restate the dynamic programming formulation below.
\begin{quote}
$P[i,D]$: maximum weight of set $Q\subseteq \{v_1,v_2,\ldots,v_i\}$
  such that $deg(Q)$ is at most $D$.   
\end{quote}
Note that $P[0,D]=0$, for all values of $D$ is the base case. For all
other case, we invoke the following recurrence.
\[
P[i,D] = \max\{P[i-1,D], w(v_i)+P[i-1,D-w(v_i)]\}
\]
After filling the table $P$ using dynamic programming, we scan all
entries of the form $P[|V|,D]$ to find the smallest value of $D$ for
which $P[|V|, D]\geq W$. Let $S$ be the corresponding set. 

\begin{lemma}
The
is a $2$-approximate solution to the Fixed Cost Minimum Edge Cover
Problem as follows.
\[
t(S) \leq deg(S) \leq deg(OPT) = 2(deg(OPT)/2)\leq 2OPT
\]
\end{lemma}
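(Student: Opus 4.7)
The plan is to justify the displayed chain of inequalities one step at a time, using the two facts established earlier in the paper: the identity $t(U) = deg(U) - e(U)$ and the optimality of the Knapsack-style dynamic program. Since the statement already lays out the proof as a four-link chain, my task is to verify each link.

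First I would argue $t(S) \leq deg(S)$. This is immediate from the earlier claim $t(S) = deg(S) - e(S)$, together with the trivial bound $e(S) \geq 0$. Next I would justify $deg(S) \leq deg(OPT)$. Here I need to observe that the optimal solution $OPT$ to \kmin is a feasible candidate for the auxiliary minimum-degree problem: by feasibility of $OPT$ we have $w(OPT) \geq W$, so $OPT$ is a legal set whose degree sum upper bounds the minimum degree achievable under the weight constraint. The dynamic program fills in $P[i,D]$ and then returns the set $S$ attaining the smallest $D$ with $P[|V|,D] \geq W$; hence $deg(S)$ is minimal over all sets of weight at least $W$, giving $deg(S) \leq deg(OPT)$.

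The middle equality $deg(OPT) = 2(deg(OPT)/2)$ is of course trivial. The final inequality $deg(OPT)/2 \leq OPT$ is the only step with any content beyond the dynamic program, and it again invokes the identity $t(U) = deg(U) - e(U)$. Applied to $U = OPT$, we get $OPT = t(OPT) = deg(OPT) - e(OPT)$. Since each internal edge of $OPT$ contributes $2$ to $deg(OPT)$, we have $e(OPT) \leq deg(OPT)/2$, and hence $OPT \geq deg(OPT) - deg(OPT)/2 = deg(OPT)/2$.

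Honestly, none of the steps is a real obstacle; the only place where one must pay attention is step two, where it is important to remember that the DP is solving a \emph{different} (auxiliary) optimization — minimize $deg(\cdot)$ subject to $w(\cdot) \geq W$ — rather than \kmin itself. The whole content of the lemma is the observation that minimizing $deg(S)$ instead of $t(S)$ loses at most a factor of two, because $t(\cdot)$ and $deg(\cdot)/2$ sandwich one another on any vertex set. Once this is noted, the four-inequality chain writes itself.
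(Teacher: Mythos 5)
Your proposal is correct and follows exactly the argument the paper intends: the paper gives no separate proof beyond the displayed chain itself, and your justification of each link (nonnegativity of $e(S)$ for the first inequality, optimality of the auxiliary minimum-degree DP with $OPT$ as a feasible candidate for the second, and $t(OPT)=deg(OPT)-e(OPT)\geq deg(OPT)/2$ for the last) is precisely what the authors leave implicit. No gaps.
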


\section{Integrality gap for Fixed Cost Minimum Edge Cover}
Consider the following natural integer linear program for the problem
\begin{eqnarray*}
\min \; \sum_{e} y_e\\
\mbox{subject\ to }\quad \quad \sum_{v\in V} x_v &\ge k,\ & \\
  y_e &\ge x_u,\ &\forall e = (u,v)\\
  y_e &\ge x_v,\ & \forall e = (u,v)\\
x_v &\in \{0,1\},\ &\forall v \in V \\
y_e & \in \{0,1\},\ &\forall e\in E
\end{eqnarray*}
The LP relaxation can be obtained by relaxing the integrality
constraints on $x_v$ and $y_e$ to $x_v \geq 0, \forall v \in V$ and 
$y_e \geq 0, \forall e\in E$.

\begin{theorem}
The above LP has an integrality gap of $2(1-o(1))$.
\end{theorem}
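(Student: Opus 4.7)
The plan is to exhibit a simple family of instances where the integer optimum is nearly twice the LP optimum. Take $G = K_n$, the complete graph on $n$ vertices, with all vertex weights equal to $1$, and let $k$ be a parameter with $k = o(n)$; in fact a constant $k$ (say $k=2$) already yields the bound $2(1-o(1))$.

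\emph{LP value.} I will produce a feasible fractional solution of small cost. Set $x_v = k/n$ for every $v\in V$ and $y_e = k/n$ for every $e\in E$. Then $\sum_v x_v = n\cdot k/n = k$, and every edge constraint $y_e \ge x_u,\ y_e \ge x_v$ holds with equality. The objective value is
\[
\sum_{e\in E} y_e \;=\; \binom{n}{2}\cdot \frac{k}{n} \;=\; \frac{k(n-1)}{2},
\]
so $\mathrm{OPT}_{LP} \le k(n-1)/2$.

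\emph{IP value.} Since all weights are $1$, any integral feasible solution must pick at least $k$ vertices. By the symmetry of $K_n$, every set $S$ of exactly $k$ vertices touches the same number of edges: each vertex of $S$ is incident to $n-1$ edges, but the $\binom{k}{2}$ internal edges of $S$ are counted twice, giving $t(S) = k(n-1) - \binom{k}{2}$. Picking a larger set can only touch more edges, so $\mathrm{OPT}_{IP} = k(n-1) - \binom{k}{2}$.

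\emph{Ratio.} Combining the two bounds,
\[
\frac{\mathrm{OPT}_{IP}}{\mathrm{OPT}_{LP}} \;\ge\; \frac{k(n-1)-\binom{k}{2}}{k(n-1)/2} \;=\; 2 - \frac{k-1}{n-1},
\]
which tends to $2$ as $n\to\infty$ with $k = o(n)$. This yields an integrality gap of $2(1-o(1))$, as required. The construction is essentially routine; the only thing to check carefully is that the integer optimum really is forced to be $k(n-1)-\binom{k}{2}$, which follows immediately from unit weights and the transitivity of $K_n$, and that the parameter $k$ is chosen so the additive error $(k-1)/(n-1)$ vanishes, i.e.\ $k = o(n)$.
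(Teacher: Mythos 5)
Your proof is correct, and it takes a genuinely different route from the paper. The paper builds a random graph on $n$ vertices with edge probability $1/\lfloor\sqrt{n}\rfloor$ and $k=\lfloor\sqrt{n}\rfloor$, then invokes Chernoff bounds and a union bound to argue that every $k$-subset has only $O(n^{1-\epsilon})$ internal edges, so that $t(Q)=\deg(Q)-e(Q)$ is close to $\deg(Q)\approx n$ while the uniform fractional solution costs about $n/2$. Your construction replaces all of this with the complete graph: in $K_n$ the quantity $t(S)=|S|(n-1)-\binom{|S|}{2}$ is determined exactly by $|S|$ and is nondecreasing in $|S|$, so the integral optimum is computed in closed form with no concentration argument, and the same uniform fractional solution $x_v=y_e=k/n$ certifies the LP upper bound. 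The calculation $2-(k-1)/(n-1)$ is right, the feasibility checks are right, and the monotonicity claim justifying $\mathrm{OPT}_{IP}=k(n-1)-\binom{k}{2}$ holds since $f(s+1)-f(s)=n-1-s\ge 0$. What your version buys is simplicity and full rigor (the paper's probabilistic estimates are somewhat loosely executed); what the paper's version buys is that the gap is exhibited on a sparse random instance with $k=\Theta(\sqrt{n})$ growing, which is closer in spirit to the hard (dense-$k$-subgraph-like) instances discussed elsewhere in the paper, whereas your instance with constant $k$ is trivially solvable by brute force. Since the integrality gap is a property of the relaxation and not of computational hardness, that difference does not affect the validity of your argument for the theorem as stated.
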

Let $k=\lfloor \sqrt{n}\rfloor$. 
Construct a graph $G$ on $n$ vertices as follows. For each pair of
vertices, include an edge between the pair with a probability
$1/\lfloor \sqrt{n}\rfloor$. For any vertex $v$, 
$E[deg(v)] = n(1/\lfloor \sqrt{n}\rfloor )\leq \lceil \sqrt{n}\rceil$. 
Using Chernoff bounds, for $0<\delta < 1$, we have
\[
\sqrt{n}(1-o(1)) \leq deg(v) \leq \sqrt{n}(1+o(1))
\] 
Consider any subset $Q$ of vertices in $G$ such that
$|Q|=\lfloor \sqrt{n}\rfloor $. Then we have 
\[
E[e(Q)] = \frac{1}{\lfloor \sqrt{n}\rfloor }{Q\choose 2} 
         = \frac{\lfloor \sqrt{n}\rfloor (\lfloor \sqrt{n}\rfloor
  -1)}{2\lfloor \sqrt{n}\rfloor }  
          = \frac{\lfloor \sqrt{n}\rfloor -1}{2}
\]
Thus, $n\geq 4$, we have $\sqrt{n}/4\leq E[e(Q)] < \sqrt{n}/2$. 
We use the following Chernoff bound to obtain the probability that $e(Q)\geq n^{1-\epsilon}$,
for a constant $\epsilon$.
\[
\Pr[e(Q)\geq (1+\delta)E[e(Q)]] \leq \left (\frac{\exp(\delta)}{(1+\delta)^{(1+\delta)}} \right )^{E[e(Q)]}
\]  
In our case, $2n^{1/2-\epsilon}\leq 1+\delta \leq  4n^{1/2-\epsilon}$, thus we get
\[
\Pr[e(Q)\geq n^{1-\epsilon}] \leq \left (\frac{\exp(4n^{1/2-\epsilon})}{(2n^{1/2-\epsilon})^{2n^{1/2-\epsilon}}} \right )^{\sqrt{n}/4}
\]
Let $f(n,\epsilon)= \left (
\frac{\exp(n^{1/2-\epsilon})}{(2n^{1/2-\epsilon})^{(n^{1/2-\epsilon}/2)}}
\right )^{\sqrt{n}}$.
The number of sets of size $\lfloor \sqrt{n}\rfloor$ is given by ${n\choose \sqrt{n}}\leq
(ne/\lfloor \sqrt{n}\rfloor )^{\sqrt{n}} = (\lceil \sqrt{n}\rceil
e)^{\sqrt{n}}$. The probability that there is no 
subset of size $\lfloor \sqrt{n}\rfloor$ that has at least $n^{1-\epsilon}$ edges is
given by the union-bound as follows
\[
f(n,\epsilon){n\choose \sqrt{n}} << 1
\]
The number of edges with at least one end point in $Q$ is given by
\begin{eqnarray*}
t(Q) & = deg(Q) - e(Q) \\
& \geq \lfloor \sqrt{n}\rfloor \cdot \sqrt{n}(1-o(1)) -
n^{1-\epsilon}\\
& = n(1-o(1))
\end{eqnarray*}
On the other hand, consider the fractional solution in which
$x_v=1/\sqrt{n}$, for each $v$ and $y_e=1/\sqrt{n}$, for each $e\in E$. This
LP solution is feasible and has a cost of $|E|/\sqrt{n}$. The number
of edges $|E| = n\sqrt{n}/2(1+o(1))$. Thus the cost of the LP solution
is at most $n(1+o(1))/2$, which results in a gap of $2(1-o(1))$.

\section{APX-hardness for unweighted Fixed Cost Minimum Edge Cover and
  Maximum Weight $m'$-Edge Cover}

Let $G$ be the input for the \dks problem and let $OPT$ be an optimal
subset of $k$ vertices. To prove the hardness result we consider the
following \textit{important} instance $\langle G,k\rangle$ of the \dks problem.

\begin{enumerate}
\item [$P_1$.] The $k/2$ largest degree vertices, $H$,  
in $G$, have average degree $d_H=\Theta(n^{1/3})$
\item [$P_2$.]
$k=\Theta(n^{2/3})$, and 
\item [$P_3$.]
$OPT$ has average degree $d^*=\Theta(n^{1/3})$.
\end{enumerate}

Feige et al. \cite{FKP} gave a relatively simple 
$n^{1/3}$ ratio for the {\dks} problem. 
The ratio was improved to $n^{1/3-1/60}$
by improving the ratio of $n^{1/3}$ for two very specific instances.
One of the instances was the important instance defined above.
We now show that if \kmin admits a PTAS then this important instance for
the \dks problem admits a constant factor approximation algorithm.

Consider the important instance and assume that $e(H)=O(k)$. Note that
a constant-factor approximation for the important instance implies a
constant approximation for the case when $e(H)=O(k)$. Clearly,
removing $H$ does not change the value of the optimum up to lower order
terms. This \textit{modified} instance has a maximum degree of $O(n^{1/3})$ and it
also satisfies properties $P_2$ and $P_3$ of the important instance.
The best ratio, given the state-of-the-art, for the modified instance is
$\Theta(n^{2/9})$ (\textit{M. Charikar, Private Communication}) and hence
the following conjecture seems 
highly likely: The modified instance does not admit a constant
approximation. 

\begin{claim}
We can modify $G$ into a graph $G'$ 
for which the optimal solutions for the \dks problem and the \kmin
are the same, and in addition, the value of the optimum solution 
for the \dks problem does not change.
\end{claim}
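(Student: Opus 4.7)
The plan is to regularize $G$ (the modified important instance, whose maximum degree is some $\Delta = \Theta(n^{1/3})$) by attaching degree-one pendant vertices, and to choose the vertex weights in the \kmin instance so that its objective is forced to coincide with the \dks objective. Concretely, for each $v \in V$ I would introduce $\Delta - \deg_G(v)$ fresh pendant vertices, each adjacent only to $v$, and call the result $G'$. Every original vertex then has degree exactly $\Delta$ in $G'$ and every pendant has degree $1$. For the \kmin instance on $G'$, I would assign $w(v) = 1$ for $v \in V$, $w(p) = 0$ for every pendant $p$, and use weight target $W = k$.

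The main step is to argue that any optimal \kmin solution $S$ on $G'$ lies entirely in $V$ with $|S|=k$, and then to compute $t_{G'}(S)$ on such sets. Pendants carry no weight, so they are never needed for feasibility; moreover, including a pendant $p$ in $S$ either leaves $t(S)$ unchanged (when $p$'s parent is already in $S$) or strictly increases $t(S)$ by one, so no optimum uses a pendant. Enlarging $S$ beyond $k$ originals only touches more edges, so $|S|=k$. For any such $S \subseteq V$, the identity $t(U) = \deg(U) - e(U)$ together with $\deg_{G'}(v) = \Delta$ on originals yields
\[
t_{G'}(S) \;=\; \bigl(\deg_G(S) - e_G(S)\bigr) + \sum_{v \in S}(\Delta - \deg_G(v)) \;=\; k\Delta - e_G(S).
\]
Because $k\Delta$ is the same for every $k$-subset of $V$, minimizing $t_{G'}(S)$ is exactly maximizing $e_G(S)$, which is the \dks objective on $G$. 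Hence an optimal \kmin solution on $G'$ is precisely an optimal \dks solution on $G$, giving the first part of the claim.

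It remains to check that the \dks optimum value is unchanged. Taking any \dks optimum $S^*$ of $G$, one has $e_{G'}(S^*) = e_G(S^*) = e^*$, so the \dks optimum on $G'$ is at least $e^*$. Conversely, for any $k$-subset $S' = T \cup P$ of $V(G')$ with $T \subseteq V$ and $P$ pendants, each pendant contributes at most one edge, so $e_{G'}(S') \le e_G(T) + |P|$; extending $T$ greedily to size $k$ inside $V$ only increases the first term, while $|P| \le k = \Theta(n^{2/3})$ is negligible next to $e^* = kd^*/2 = \Theta(n)$ guaranteed by properties $P_2$ and $P_3$ of the important instance. The delicate point, and the main obstacle I would flag, is exactly this last step: one does not get literal equality of the two \dks optima, only agreement up to an additive $O(k)$ that must be absorbed into a $(1+o(1))$ multiplicative factor. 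The parameter regime of the important instance is what buys this slack, and it is precisely what the subsequent approximation-preserving reduction from \dks to \kmin needs.
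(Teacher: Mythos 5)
Your construction diverges from the paper's in one essential respect, and the divergence matters: you keep the pendant vertices at degree one and compensate by giving them weight zero, which turns the instance into a \emph{weighted} \kmin instance. The claim, however, sits in the section on APX-hardness for the \emph{unweighted} \kmin problem, and the downstream arguments (Theorem \ref{thm:kminhard} and the \mwms reduction) explicitly use that $G'$ is an unweighted $\Delta$-regular graph. In the unweighted setting your $G'$ fails outright: a set of $k$ pendants is feasible and touches only $k=\Theta(n^{2/3})$ edges, whereas any $k$-subset of original vertices touches $k\Delta-e_G(S)=\Theta(n)$ edges, so the \kmin optimum would consist entirely of pendants and would carry no information about \dks. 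The paper closes exactly this hole by making \emph{all} vertices degree $\Delta$: it pads the pendant set $F$ with $2n^2$ extra vertices coming from $n^2$ disjoint edges and overlays a random $(\Delta-1)$-regular graph on $F\cup F'$, so that $G'$ is genuinely $\Delta$-regular, every $k$-subset has the same degree sum, and minimizing $t(S)=k\Delta-e(S)$ is maximizing $e(S)$ with no weights needed. It then has to verify (via the randomness and the $n^2$ padding) that no $k$-subset of the auxiliary vertices contains more than $O(k)$ edges, so the \dks optimum moves only by lower-order terms --- the same additive $O(k)$ slack you correctly flag at the end of your argument, and which both proofs absorb into a $(1+o(1))$ factor using $P_2$ and $P_3$.

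If weighted \kmin were the target, your argument would be essentially complete (your identity $t_{G'}(S)=k\Delta-e_G(S)$ and the exchange argument ruling out pendants are both correct). To repair it for the unweighted problem you would need to either adopt the paper's regularization of the auxiliary vertices or find another way to make pendant-heavy sets uncompetitive without using weights; the weight-zero trick cannot be simulated in the unweighted model.
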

\begin{proof}
Let the largest degree of $G$ 
be $\Delta=c_1\cdot n^{1/3}$.

We show how to make the graph $\Delta$ regular without 
changing the optimum value for the \dks instance.
For every vertex $v\in V$ add 
$\Delta-deg(v)$ vertices $F_v$ and connect 
$v$ to all the vertices of $F_v$. 
The sets $F_v$ for different vertices are disjoint.
Let $F=\bigcup_{v} F_v$.
We now add a set of $n^2$ disjoint edges (no two edges share a vertex)
$F'$ (and thus $2n^2$ new vertices).
We then make $F\cup F'$ regular by adding 
adding a random $\Delta-1$ regular graph on $F'\cup F$.
Let $G'$ be the new graph.

Clearly, every vertex has degree $\Delta$ now.
Indeed all vertices in $F_v$ and the $2n^2$ vertices that were added had degree
exactly one before the random $\Delta-1$ subgraph is added.
Since $G'$ is regular, the sum of degrees 
in $G'$ for any $k$ vertices is the same.
As $t(U)=deg(U)-e(U)$ the optimal solutions for \kmin and \dks 
are the same on regular graphs.
Since the graph on $F\cup F'$ is basically a random 
graph with degrees $O(n^{1/3})$, and at least $n^2$ vertices, 
basic calculations show that for 
every $F''\subseteq F\cup F'$, $e(F'')=O(|F''|)$.
In addition, every vertex in $F'\cup F$ has degree at most $1$
in $G$. 
Therefore any $F''\subseteq F\cup F'$, can contribute at most 
$deg(F'')=O(|F''|)$ to the number of edges 
in a \dks solution.
As $|F''|\leq k$, it follows that $F''$ can contribute $O(k)$ edges to the 
the \dks solution.

Observe that this number is negligible compared to the \dks in $G$.
The number of edges in the \dks optimum in $G$ is
$c'kn^{1/3}$, for some constant $c'$.
Hence the value of the optimum solution does not change (up to lower
order terms)
by the addition of $F\cup F'$. 
\end{proof}

\begin{theorem}
\label{thm:kminhard}
A PTAS for \kmin problem implies a constant factor approximation for
the modified \dks instance.
\end{theorem}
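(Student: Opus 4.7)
The plan is to run the hypothesized PTAS for \kmin on the $\Delta$-regular graph $G'$ constructed in the preceding Claim, and translate the resulting low-touch set back into a dense $k$-vertex subgraph of the modified \dks instance. The leverage comes from the identity $t(S) = |S|\Delta - e(S)$ that holds on $G'$ by regularity: on a subset of size exactly $k$, minimizing $t$ is equivalent to maximizing $e$, so \kmin and \dks have the same optimum value on $G'$ (up to lower-order terms, by the Claim).

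Concretely, I would apply the Claim to produce $G'$ with $\Delta = c_1 n^{1/3}$, then invoke the PTAS on $G'$ with unit vertex weights and weight bound $W=k$, obtaining a set $S'$ with $|S'|\ge k$ and $t(S') \le (1+\epsilon)\,t^*$. Let $U^*$ be a \dks optimum, so $|U^*|=k=\Theta(n^{2/3})$ and $e(U^*)=k d^*/2=\Theta(n)$ by properties $P_2$ and $P_3$. Since $U^*$ is \kmin-feasible, $t^* \le t(U^*) = k\Delta - e(U^*)$. Regularity also gives $t(S')\ge |S'|\Delta/2$ (because $2e(S')\le |S'|\Delta$), which combined with the PTAS bound yields $|S'|\le 2(1+\epsilon)k$. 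Rearranging,
\[
e(S') \;=\; |S'|\Delta - t(S') \;\ge\; (1+\epsilon)\,e(U^*) \,-\, \epsilon\,k\Delta.
\]
Because $k\Delta$ and $e(U^*)$ are both $\Theta(n)$, their ratio is an absolute constant $C$; choosing $\epsilon<1/(2C)$ gives $e(S')\ge e(U^*)/2$, a constant-factor approximation for the densest subgraph problem on the vertex set $S'$.

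To finish, I need to turn $S'$ into a $k$-subset of $V(G)$. First, any vertices in $S'\cap (F\cup F')$ can be swapped for arbitrary vertices of $V(G)\setminus S'$; by the Claim, the $F\cup F'$ part contributes only $O(|S'\cap(F\cup F')|)=O(k)=o(e(U^*))$ internal edges, so this substitution is essentially free. Second, since $k\le |S'|\le 2(1+\epsilon)k$, a uniform random $k$-subset $T\subseteq S'$ satisfies $E[e(T)] = e(S')\,{k\choose 2}/{|S'|\choose 2} = \Omega(e(S')) = \Omega(e(U^*))$; a concrete $T$ realizing this expectation can be extracted by the method of conditional expectations. The main obstacle is the quantitative step $k\Delta/e(U^*)=O(1)$, which is exactly the content of properties $P_2$ and $P_3$ of the important instance: without that ratio being bounded by a constant, the required $\epsilon$ would have to shrink with $n$ and the constant-factor guarantee would collapse. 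Everything else is routine bookkeeping on the reduction.
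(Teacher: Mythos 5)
Your proposal is correct and follows essentially the same route as the paper: exploit the $\Delta$-regularity of $G'$ via $t(S)=|S|\Delta-e(S)$, note that $k\Delta$ and $e(U^*)$ are both $\Theta(n)$ so their ratio is an absolute constant, and choose the PTAS parameter $\epsilon$ small relative to that constant so the surviving internal-edge count is a constant fraction of $e(U^*)$. You are somewhat more careful than the paper in handling the cases $|S'|>k$ and $S'\cap(F\cup F')\neq\emptyset$ (the paper implicitly assumes the output is a $k$-subset of $V(G)$), but these are routine clean-ups rather than a different argument.
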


\begin{proof}
Since $G'$ is a regular graph, 
the optimal \kmin solution 
is the same as the \dks optimum solution.
In fact the number of touching edges is
$\Delta k-c'kn^{1/3}$. 
Recall that $\Delta=c_1\cdot n^{1/3}$.
Thus the optimum is 
$c_1kn^{1/3}-c'kn^{1/3}$.

By the value of $k$, the optimum value is 
$c_1n-c'n$.
If \kmin has a PTAS then there exists a $1+c'/c_1$-approximation for
the \kmin problem. 
Assuming this ratio, a set $U$ is output so that it is touched
by at most 
$c_1n-c'n+(c'/c_1)(c_1n-c'n)
=c_1n-({c'}^2/c_1)n$ edges. This implies that $e(U)= (c'^2/c_1)n$.
Thus we find 
a subgraph with $k$ vertices and at least 
${c'}^2n/c_1$ edges. Therefore the ratio obtained on the modified instance
is $c'/(c'^2/c_1) = c_1/c'$, contradicting the conjecture that the modified
instance does not admit a constant approximation.
\end{proof}

\subsection{APX-hardness for Maximum Weight $m'$-Edge Cover}

We show that 
PTAS for (unweighted) \mwms implies a PTAS
for (unweighted) \kmin on the modified instance.
As we showed that this is not possible, 
\mwms is APX-hard as well.

Recall that the optimum for the modified instance 
had $c_1n-c'n$ edges and size $k$.
Furthermore, the modified instance is 
$\Delta$-regular.

Let $OPT$ be the optimum solution for the \kmin instance.
The number of edges touching $OPT$ is at least:
$t(OPT)\geq 
k\cdot \Delta/2.$ 
We impose a bound of $c_1n-c'n$ on the number of edges 
to the hypothetical PTAS for the \mwms 
problem.
A PTAS algorithm for \mwms will return a set $S$ with size {\em at least}
$k/(1+\epsilon)$, touched by 
at most $c_1n-c'n$ edges.

The amount of vertices still required to be added 
to transform $S$ to a legal \kmin output is 
$k-k/(1+\epsilon)=\epsilon\cdot k/(1+\epsilon)$.
We can complete the set $S$ to size $k$ by {\em any}
set $S'$ of 
$\epsilon\cdot k/(1+\epsilon)$ vertices.
In such case $t(S')\leq 
\epsilon\cdot \Delta\cdot k/(1+\epsilon)$.
As we showed before that $t(OPT)\geq k\cdot \Delta/2$,
it follows that 
$t(S\cup S')\leq t(OPT)+2\epsilon \cdot t(OPT)$

Thus for getting a ratio of $1+\delta$ for any constant $\delta$ just set $\epsilon=\delta/2$. Therefore, the assumption that the \mwms
problem admits a PTAS, implies that the \kmin problem 
admits a PTAS on the modified instance, which is highly unlikely, by
Theorem \ref{thm:kminhard}.

\section{An approximation for Fixed Cost Minimum Edge Cover implies the same approximation for Maximum Weight $m'$-Edge Cover}
We first transform the input instance for the \mwms problem to one in
which the optimum value of the objective function is  
at most $n^5$ by paying a very small penalty in the approximation ratio.

\begin{lemma}
\label{lemma:newInstance}
For the Maximum weight $m'$-subgraph problem, we can convert the input
instance $\langle G,w,m'\rangle$, with an optimal
solution denoted by $OPT$ into an instance $\langle 
G',w',m'\rangle$, with optimal solution OPT'', such that $OPT''\leq
n^5$. Furthermore, if $OPT'$ is the total weight of the vertices in
$OPT''$ under the weight function $w$, then 
\[
OPT'\geq OPT(1-1/n)(1-1/n^2)
\]  
\end{lemma}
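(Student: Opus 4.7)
The plan is to apply a standard guess-and-scale reduction: once I fix the scale of $OPT$ by guessing the weight of the heaviest vertex in $OPT$, all relevant weights can be rescaled and rounded to integers bounded by a polynomial in $n$, at only a negligible loss in optimum value.

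First I would iterate over all $n$ candidates for $v^*$, the heaviest vertex in $OPT$, and for each guess build a modified instance on the same graph with the same edge bound $m'$ but with new weights $w'$. Let $W^* := w(v^*)$, so $W^* \le OPT \le nW^*$. I set $w'(v)=0$ for every $v$ with $w(v)>W^*$ (these cannot appear in $OPT$ since $v^*$ is heaviest there, so zeroing them is harmless), and $w'(v)=0$ for every $v$ with $w(v)<W^*/n^2$ (at most $n$ such ``light'' vertices lie in $OPT$, contributing at most $W^*/n\le OPT/n$ to the total, which accounts for the $(1-1/n)$ factor). For every remaining vertex I define $w'(v) := \lfloor w(v)\cdot n^3/W^*\rfloor$. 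Across the $n$ guesses I keep the instance whose $w'$-optimum, translated back under $w$, is largest.

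For the analysis, since $w'(v)\le n^3$ for every surviving vertex and any feasible solution has at most $n$ vertices, $OPT''\le n^4\le n^5$. The rounding step loses at most $W^*/n^3$ per vertex, hence at most $W^*/n^2\le OPT/n^2$ summed over the surviving vertices of $OPT$; this gives the $(1-1/n^2)$ factor. Converting the returned set back via $w(v)\ge w'(v)\cdot W^*/n^3$, and noting that the $w'$-optimizer scores at least as well as $OPT$ intersected with the surviving vertex set, the two losses combine multiplicatively into $OPT'\ge OPT(1-1/n)(1-1/n^2)$. The main obstacle is to confirm that $OPT''$ may be taken to consist only of surviving vertices; this holds because any weight-$0$ vertex the $w'$-optimizer might include only uses edge budget without gain, so a standard exchange argument lets us strip them out without affecting feasibility or value.
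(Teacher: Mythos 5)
Your proposal is correct and follows essentially the same two-step plan as the paper: discard vertices whose weight is below a $1/n^2$ fraction of a reference heavy vertex (losing a $(1-1/n)$ factor), then rescale and round the remaining weights to integers polynomially bounded in $n$ (losing a further $(1-o(1))$ factor). The two small deviations are worth noting, and both work in your favor. First, you guess the heaviest vertex $v^*$ of $OPT$ over $n$ trials, whereas the paper simply anchors everything to the globally heaviest vertex $v_1$; your version pays a factor $n$ in running time but makes the inequality $W^*\le OPT$ automatic, while the paper's bound $w(v_1)\le OPT$ implicitly assumes $\{v_1\}$ is itself feasible. Second, you keep the graph intact and set the discarded vertices' weights to zero, whereas the paper passes to the subgraph induced on the surviving vertices; your choice preserves the edge-touching counts, so feasibility with respect to the original budget $m'$ carries over verbatim, and your exchange argument for stripping zero-weight vertices is sound. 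The only blemish is cosmetic: your additive accounting yields $OPT'\ge OPT(1-1/n-1/n^2)$, which is marginally weaker than the stated product $(1-1/n)(1-1/n^2)$ by an $O(1/n^3)$ term; the paper's own derivation is equally loose here, and only the $1-o(1)$ form is used downstream.
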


\begin{proof}
Let $v_1, v_2, \ldots, v_n$ be the vertices in $G$ such that 
$w(v_1)\geq w(v_2)\geq \cdots \geq w(v_n)$. Let $v_p$ be the last
vertex in the ordering such that $w(v_p)\geq w(v_1)/n^2$. In other
words, for each $j$, $p<j\leq n$, $w(v_1)> n^2w(v_j)$. Let $G'$ is the
graph induced on vertices $v_1,v_2, \ldots, v_p$. Let OPT$_1$ be the
optimal solution for the instance $\langle G',w,m'\rangle$. Note that
OPT may choose some vertices from the set $\{v_{p+1}, v_{p+1}, \ldots,
v_{n}\}$. The error incurred in not considering these vertices is at
most $n(w(v_1)/n^2) \leq OPT/n$. Thus we get
\[
OPT_1 \geq OPT(1-1/n)
\]
We now scale the weights of vertices in $G'$ to create an instance
$\langle G',w',m'\rangle$, where 
\[
w'(v_j) = \left \lfloor \left (\frac{w(v_j)}{w(v_p)}\right )n^2  \right \rfloor
\]
Let $OPT''$ be an optimal solution to $\langle
G',w',m'\rangle$. Clearly, $OPT''\leq n^5$.
Let $OPT'$ be the cost of the solution
$OPT''$ under the weight function $w$, i.e., $OPT' =
\sum_{v\in OPT''}w(v)$. Thus we have
\begin{equation}
OPT' \geq OPT_1\left (1 - \frac{1}{n^2}\right )
\geq OPT\left (1 - \frac{1}{n}\right )\left (1 - \frac{1}{n^2}\right )
\end{equation}
\end{proof}

\begin{theorem}
\label{thm:}
For some constant $\alpha$, an $\alpha$ approximation guarantee for
{\kmin} implies an $\alpha(1+o(1))$ approximation guarantee for {\mwms}. 
\end{theorem}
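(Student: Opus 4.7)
The plan is to reduce \mwms to a polynomial number of \kmin instances by enumerating a guess for the optimum weight, combined with the preprocessing afforded by Lemma~\ref{lemma:newInstance}. First, I apply that lemma to replace the input $(G, w, m')$ with an instance $(G', w', m')$ in which the rescaled optimum $OPT''$ satisfies $OPT'' \leq n^5$, while the cost of the rescaled optimum under the original weights is at least $OPT \cdot (1-1/n)(1-1/n^2)$. This single step absorbs the $(1+o(1))$ slack into the preprocessing, so it suffices from here on to produce an $\alpha$-approximation on the rescaled instance.

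Since $OPT'' \leq n^5$, I enumerate a guess $W \in \{1, 2, \ldots, n^5\}$ for the rescaled \mwms optimum value $W^*$. For each $W$, I invoke the hypothesized $\alpha$-approximation \kmin algorithm on $(G', w', W)$, obtaining a set $U_W$ with $w'(U_W) \geq W$ and $t(U_W) \leq \alpha \cdot t^*(W)$, where $t^*(W)$ denotes the \kmin optimum at threshold $W$. Among all outputs $U_W$ satisfying the edge budget $t(U_W) \leq m'$, I output the one of maximum weight.

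The analysis hinges on the following observation: for any $W \leq W^*$, the \mwms optimum set is itself a feasible \kmin solution of cost at most $m'$, so $t^*(W) \leq m'$, and hence $t(U_W) \leq \alpha m'$. In particular, when the guess equals $W^*$, we obtain a set of weight at least $W^*$ with at most $\alpha m'$ touched edges. The main obstacle is precisely that this bound $\alpha m'$ may exceed $m'$, so the raw \kmin output may be infeasible for \mwms.

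To bridge this gap I expect one of two routes. Either (a) identify a guess $W$ for which $t^*(W) \leq m'/\alpha$, so that the $\alpha$-approximate \kmin output automatically satisfies $t(U_W) \leq m'$; here one would argue, using $W^* \leq n^5$ and the integer nature of $w'$, that the largest such $W$ is within a $(1+o(1))$ factor of $W^*/\alpha$-worth of the optimum value that the algorithm could in principle achieve. Or (b) apply a trimming post-processing step that removes a carefully chosen subset of vertices from $U_W$ to restore the $m'$ edge budget while losing only a $(1+o(1))$ factor of its weight, perhaps guided by guessing the heaviest vertex of $OPT$ in the spirit of the Section~2 dynamic program. Controlling the weight loss in the trimming (equivalently, the $t^*$-monotonicity in (a)) to $(1+o(1))$ rather than a further constant factor is where I expect the delicate work to lie.
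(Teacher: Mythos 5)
Your framework is the paper's: preprocess via Lemma~\ref{lemma:newInstance}, enumerate the guess for the rescaled optimum weight $W^*$, and at the correct guess invoke the $\alpha$-approximate \kmin algorithm to obtain a set $U$ with $w(U)\geq W^*$ touching at most $\alpha\cdot m'$ edges. You also correctly isolate the crux: $U$ may violate the edge budget by a factor of $\alpha$. But the proposal stops exactly where the paper's proof begins. The paper's resolution is to subsample $U$, keeping each vertex independently with probability $1/\alpha$ to form a set $B$, and to show that with positive probability neither of two bad events occurs: $w(B)\leq W^*/((1+\tau)\alpha)$ (bounded by applying Markov's inequality to $w(U\setminus B)$) and $t(B)>m'$ (bounded by a Chernoff-type argument, after noting that one may assume $m'$ exceeds any fixed constant since otherwise \mwms is solvable by brute force, and after addressing the correlation among the edge-deletion events). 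The construction is then derandomized by the method of conditional expectations. This sampling-and-derandomization step is the actual content of the theorem and is entirely absent from your write-up.

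Neither of your two proposed substitutes can be made to work. Route (a) fails because the weight achievable under the shrunken budget $m'/\alpha$ bears no $(1+o(1))$ relation to $W^*$: if the \mwms optimum is a single heavy vertex of degree exactly $m'$ and every other vertex has negligible weight, then no useful solution fits within budget $m'/\alpha$, yet $W^*$ is large; there is no monotonicity argument that rescues this. Route (b) asks for strictly more than is true: a trimming that restores the budget from $\alpha\cdot m'$ to $m'$ while retaining a $1-o(1)$ fraction of the weight would, combined with the paper's own $2$-approximation for \kmin, yield a $(1+o(1))$-approximation for \mwms outright, contradicting the paper's conditional APX-hardness of \mwms. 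Indeed even the paper's subsampling only retains a $1/((1+\tau)\alpha)$ fraction of $W^*$, which is precisely what produces the final ratio.
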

\begin{proof}
Suppose that we have an $\alpha>1$ approximation algorithm 
for {\kmin}, for some constant $\alpha$.
Using Lemma \ref{lemma:newInstance}, we transform the {\mwms} instance
$(G,m')$ with an optimal weight $W^*$ to an instance in which the optimum weight 
$W^*\leq n^4$. This increase the approximation ratio by a factor of
only $(1+o(1))$.
We now consider the modified instance $(G',m')$ as an 
input to {\kmin}. We guess the value of $W^*$ by trying all possible
integral values between $1$ and $n^4$. For each guess of $W^*$, we
apply the $\alpha$-approximation algorithm for {\kmin} to the new
instance. 
When our guess $W^*$ is correct and we apply the algorithm, we obtain a
set $U$ of vertices of cost at least $W^*$ and that touch at most 
$\alpha\cdot m'$ edges.

Create a new set $B$ in which every vertex from $U$ is chosen 
with a probability $1/\alpha$.
We say that an edge $e$ is {\em deleted} if $e\not\in E(B)$.
Let $\tau$ be a constant.

We consider the following ''bad" events: (i) $w(B)\leq
W^*/((1+\tau)\alpha)$, (ii) $t(B)>m'$.

We first bound the probability that $w(B)\leq W^*/((1+\tau)\alpha)$.
The expected cost of $B$ is $w(U)/\alpha= W^*/\alpha$.
Consider the expected cost of $U\setminus B$. The expected cost is $W^*- W^*/\alpha$.
The event that $w(B)\leq W^*/(\alpha(1+\tau)))$
is equivalent to the event 
$w(U)-w(B)\geq W^*-W^*/(\alpha(1+\tau))=W^*(1-1/(\rho(1+\tau))$.
By the Markov's inequality, the last event has probability at most
$(1-1/\alpha)/(1- 1/(\alpha(1+\tau))= 1-\tau/(\alpha+\alpha\cdot \tau-1)$

We now bound the probability of the second bad event.
The expected number of edges in $E(B)$ is
at most $m'(1-(1-\frac{1}{\alpha})^2)$.  
Note that the events that edges are deleted 
are positively correlated because given 
that an edge $(v,u)$ is deleted, one of the possibilities that 
can cause this event, is that $v$ is deleted, and in that case all
edges of $v$ are deleted with probability $1$. Clearly, we can assume
that $m'\geq c$ for any constant $c$. Otherwise, we can solve the
{\mwms} problem in polynomial time by checking all subsets of edges.
By the Chernoff bound, the probability that the number of edges is
more than $m'$ is bounded by $exp(-c\delta^2/2)$, for some $\delta <
1$. We can choose a large enough $c$ so that the above probability is
at most $\tau/(2(\alpha+\alpha\cdot \tau -1))$.
This would mean that the sum of probabilities of bad events is strictly 
smaller than $1$.
This construction can be derandomized 
by the method of conditional expectations. 
\end{proof}

\section{Exact algorithm for the Degrees Density Augmentation Problem}
The \dap problem is as follows: Given
a graph $G=(V,E)$ and a subset $U\subseteq V$, the objective is to
find a subset $W\subseteq V\setminus U$ such that 
\[
\rho = \frac{e(W)+e(U, W)}{deg(W)} \mbox{ is maximized}
\]

The \dap problem is related to the \kmin problem in the same way as the
Densest subgraph problem is related to the Dense $k$-subgraph
problem.  A natural heuristic for the \kmin problem would be to
iteratively find a set $W$ with good augmentation degrees density. 
A polynomial time exact solution for the problem using linear
programming is given in \cite{CMNRS}. Here we present a combinatorial
algorithm. 
 
We solve the \dap problem exactly by finding minimum $s$-$t$ cut in
the flow network constructed as follows. Let $\overline{U}$ denote the
set $V\setminus U$. In addition to the source $s$
and the sink $t$, the vertex set contains $V_{E'}\cup \overline{U}$,
where $V_{E'}=\{v_e\, |\, e\in E \mbox{ and both end points of $e$ are
in } \overline{U}\}$.
There is an edge from $s$ to every vertex in $V_{E'}\cup \overline{U}$. If
$a$ is a vertex in $V_{E'}$ then the capacity of the edge
$(s,a)$ is $1$, otherwise, the capacity of the edge is
$deg_U(a)$.  For each vertex $v_e\in V_{E'}$, where $e=(p,q)$, there
are edges $(v_e,p)$ and $(v_e,q)$. Each such 
edge has a large capacity of $M= \infty$ (any capacity of at least
$n^5$ would work).
Finally, each
vertex $p\in \overline{U}$ is connected to $t$ and has a capacity of
$\rho\cdot deg(p)$. 
%

\subsection{Algorithm}
For a particular value of $\rho$, let $W_s\subseteq \overline U$ be
the vertices that are on the $s$($t$) side of a minimum $s$-$t$
cut. Let $V_{E'}^s\subseteq V_{E'}$($V_{E'}^t\subseteq V_{E'}$) be the
vertices in $V_{E'}$ that are on the $s$($t$) side of the minimum $s$-$t$
cut. We now state the algorithm.

\begin{enumerate}
\item[1.]
Construct the flow network as shown above.
\item[2.]
For each value of $\rho$, compute a minimum $s$-$t$ cut and find the
resulting value of 
$e(W_s) + e(U, W_s) - \rho deg(W_s)$. Find the largest value of
$\rho$ for which the expression is at least $0$. 
\item[3.]
Return $W_s$ corresponding to the largest value of $\rho$.
\end{enumerate}

\subsection{Analysis}
\begin{lemma}
\label{lemma:mincut}
Any minimum $s$-$t$ cut in the above flow network has 
capacity at most $2n^2$.
\end{lemma}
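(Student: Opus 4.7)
The plan is to upper-bound the minimum $s$-$t$ cut by exhibiting one specific, easy-to-analyze cut whose capacity is at most $2n^2$. Since the minimum cut is bounded above by the capacity of any cut, this suffices.

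The natural candidate is the trivial $s$-side cut $(\{s\},\ \{t\}\cup V_{E'}\cup \overline{U})$. The edges crossing this cut are exactly the edges leaving $s$, so its capacity equals
\[
\sum_{v_e \in V_{E'}} 1 \;+\; \sum_{a\in \overline{U}} deg_U(a) \;=\; |V_{E'}| \;+\; e(U,\overline{U}).
\]
Here I am using the capacity rule from the construction: each $s$-to-$V_{E'}$ edge has capacity $1$, and each $s$-to-$\overline{U}$ edge has capacity $deg_U(a)$, whose total is by definition the number of edges between $U$ and $\overline{U}$.

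Next, I would bound the two terms separately using only the trivial fact that $G$ has at most $\binom{n}{2}$ edges. By definition of $V_{E'}$, its size equals the number of edges of $G$ with both endpoints in $\overline{U}$, which is at most $\binom{n}{2}\le n^2/2$. Similarly, $e(U,\overline{U})\le |U|\cdot|\overline{U}|\le n^2/4$, or more simply at most $\binom{n}{2}\le n^2/2$. Summing gives a cut of capacity at most $n^2$, which is comfortably below the claimed bound of $2n^2$.

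There is no real obstacle; the lemma is slack and the argument is purely combinatorial. The one thing worth noting is that we do not need to bound $\rho$ or invoke the $t$-side edges at all, because the chosen cut avoids both the infinite-capacity middle edges and the $\rho\cdot deg(p)$ edges to $t$. The bound $2n^2$ is used later only as a polynomial-in-$n$ upper bound so that a binary search over integer-scaled values of $\rho$ runs in polynomial time.
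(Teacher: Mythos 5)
Your proposal is correct and is exactly the paper's argument: the paper also bounds the minimum cut by the capacity of the trivial cut $(\{s\}, V\setminus\{s\})$, which it asserts is at most $2n^2$ without further computation. Your explicit accounting of the two terms $|V_{E'}|$ and $e(U,\overline{U})$ simply fills in the arithmetic the paper leaves implicit.
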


\begin{proof}
This follows because the $s$-$t$ cut $(s,V\setminus \{s\})$ has
capacity at most $2n^2$.
\end{proof}

\begin{lemma}
\label{lemma:allins}
For any minimum $s$-$t$ cut $C$, $|V_{E'}^s| = e(W_s)$. 
\end{lemma}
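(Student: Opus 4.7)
The plan is to show a bijection between $V_{E'}^s$ and the edges internal to $W_s$, so their cardinalities match. I will prove two containments: every $v_e$ on the $s$-side corresponds to an edge $e$ whose endpoints both lie in $W_s$, and conversely every such edge contributes its $v_e$ to $V_{E'}^s$.

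For the first containment, I would use the capacity-$M$ edges $(v_e,p)$ and $(v_e,q)$ together with Lemma \ref{lemma:mincut}. Suppose $v_e \in V_{E'}^s$ for some $e = (p,q)$ with $p,q \in \overline U$. If either $p$ or $q$ lay on the $t$-side, the corresponding $M$-capacity edge would cross the cut, so the cut capacity would exceed $M \geq n^5$. This contradicts Lemma \ref{lemma:mincut}, which bounds the min-cut by $2n^2$. Hence both $p$ and $q$ must be in $W_s$, so $e$ is an internal edge of $W_s$.

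For the reverse containment, I would use a local exchange argument. Suppose $e = (p,q)$ is an edge with $p,q \in W_s$, but $v_e$ lies on the $t$-side. Then the edge $(s,v_e)$, of capacity $1$, crosses the cut. Consider the modified cut obtained by moving $v_e$ to the $s$-side. The edges $(v_e,p)$ and $(v_e,q)$ now go from the $s$-side to the $s$-side and no longer cross the cut. No new edges crossing the cut are introduced, and the capacity strictly decreases by $1$. This contradicts the minimality of the cut. Hence $v_e \in V_{E'}^s$.

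Combining the two containments gives $V_{E'}^s = \{v_e : \text{both endpoints of } e \text{ are in } W_s\}$, and taking cardinalities yields $|V_{E'}^s| = e(W_s)$. The main (but still small) subtlety is the first direction, where one must invoke the upper bound on the min-cut from Lemma \ref{lemma:mincut} to rule out any $M$-capacity edge being cut; the second direction is a clean exchange argument and should go through immediately.
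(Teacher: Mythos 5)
Your proof is correct and follows essentially the same route as the paper: the $M$-capacity edges plus the bound from Lemma \ref{lemma:mincut} rule out a $v_e$ on the $s$-side with an endpoint in $W_t$, and the exchange argument (moving $v_e$ to the $s$-side to save the capacity-$1$ edge $(s,v_e)$) handles the other direction. Phrasing it as a set equality rather than two cardinality inequalities is a minor cosmetic difference, not a different argument.
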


\begin{proof}
Note that it cannot be the case that $|V_{E'}^s| > e(W_s)$, as this
will result in the
capacity of the cut $C$ being at least $M$, which is not possible by
Lemma \ref{lemma:mincut}. Note that
any $s$-$t$ cut for which $|V_{E'}^s| < e(W_s)$ can be
transformed into another $s$-$t$ cut of a lower capacity in which $|V_{E'}^s| =
e(W_s)$ by moving vertices in $V_{E'}^t$ that correspond to edges in
$W_s$ to $V_{E'}^s$. Since edges from $s$ to vertices in $V_{E'}$
(vertices in $V_{E'}^t$, in particular) have
capacity of $1$, 
the capacity of the cut reduces. The claim follows. 
\end{proof}

\begin{lemma}
The \textsf{Degrees Density Augmentation} problem admits a polynomial
time exact solution.
\end{lemma}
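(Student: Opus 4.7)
The plan is to reduce \dap to a small sequence of minimum $s$-$t$ cut computations on the network already defined, using the classical ``parametric density via min cut'' argument of Goldberg adapted to the augmentation setting. First I will write down the exact capacity of a minimum $s$-$t$ cut. For a fixed $\rho$, if $W_s \subseteq \overline{U}$ and $V_{E'}^s \subseteq V_{E'}$ are the $s$-side sets of a min cut, the $M$-edges cannot be crossed, so every $v_e \in V_{E'}^s$ must have both endpoints of $e$ inside $W_s$, and Lemma \ref{lemma:allins} already guarantees equality $|V_{E'}^s| = e(W_s)$. Hence the capacity equals
\[
|V_{E'}| - e(W_s) \;+\; e(U,\overline{U}) - e(U,W_s) \;+\; \rho\,\deg(W_s).
\]
Since $|V_{E'}| = e(\overline{U})$ and $e(U,\overline{U})$ are constants independent of $W_s$, writing $C := e(\overline{U}) + e(U,\overline{U})$ this simplifies to
\[
\mathrm{cap}(C) \;=\; C \;-\; \bigl(e(W_s) + e(U,W_s) - \rho\,\deg(W_s)\bigr).
\]

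The second step is to translate this identity into a density test. Minimizing the cut is equivalent to maximizing $e(W) + e(U,W) - \rho\,\deg(W)$ over $W \subseteq \overline{U}$, and the empty set is always feasible with value $0$. Therefore there exists a nonempty $W$ with augmenting density at least $\rho$ if and only if the minimum cut value is strictly less than $C$, and the $s$-side $W_s$ of any min cut is such a witness. This is exactly step 2 of the stated algorithm.

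Third, I will argue that only polynomially many values of $\rho$ need be tested. The optimum density $\rho^*$ has the form $(e(W^*) + e(U,W^*))/\deg(W^*)$, a rational with numerator and denominator bounded by $2m \le n^2$, so the set of candidate optima has size $O(n^4)$. One can either enumerate all such $p/q$ in sorted order and pick the largest one for which the previous test succeeds, or perform binary search on $\rho \in [0,n^2]$: once the search interval has length below $1/n^4$, there is at most one rational of the required form inside it, and that rational is $\rho^*$. Each test is a single max-flow on a graph with $O(n + m)$ vertices and $O(m)$ edges, so the total running time is polynomial.

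The only real subtlety, and thus the main thing to verify carefully in the write-up, is the reduction from optimizing the $s$-side set of the cut to optimizing $W \subseteq \overline{U}$: one must justify that in a minimum cut the edge-vertices $V_{E'}^s$ are exactly the internal edges of $W_s$ (Lemma \ref{lemma:allins} already handles the ``$\ge$'' direction via the $M$-capacity edges; the reverse inequality follows because edges $(s,v_e)$ have unit capacity so placing an internal edge of $W_s$ on the $t$-side only increases the cut). Once this is in place, combining the cut-capacity identity, the density test, and the polynomial candidate set yields a polynomial-time exact algorithm for \dap.
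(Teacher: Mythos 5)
Your proposal is correct and follows essentially the same route as the paper: the same flow network, the same cut-capacity identity $\mathrm{cap} = C - (e(W_s)+e(U,W_s)-\rho\deg(W_s))$ derived via Lemma \ref{lemma:allins}, and the same $O(n^4)$ enumeration of candidate rational densities. Your treatment of the witness extraction (testing strictly below $\rho^*$, or noting that the optimum is attained at the largest candidate for which the test succeeds) is in fact slightly more careful than the paper's handling of the $W_s=\emptyset$ boundary case, but it is not a different argument.
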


\begin{proof}
We are interested in finding a non-empty set $W_s\subseteq
\overline{U}$ such that  
\(
\frac{e(W_s) + e(U,W_s)}{deg(W_s)} \mbox{ is maximized }. 
\)
Note that there are at most $2n^4$ possible values of $\rho$ that our
algorithm needs to try.
Indeed, the numerator is an integer between $1$ and $2n^2$
and the denominator is an integer between $1$ and $n^2$. 
Since minimum $s$-$t$ cut can be computed in polynomial time, our
algorithm runs in polynomial time. 
 
For any fixed guess for $\rho$, the capacity of the min $s$-$t$ cut is
given by
\begin{eqnarray*}
& \min_{W_s\subseteq \overline{U}} {|V_{E'}^t|  + deg_U(W_t) + \rho
deg(W_s)} \\
= & \min_{W_s\subseteq \overline{U}} {|V_{E'}|-|V_{E'}^s| +
deg_U(\overline{U}) - deg_U(W_s) + \rho deg(W_s)} \\
= & |V_{E'}| + deg_U(\overline{U}) - \max_{W_s\subseteq \overline{U}}
{|V_{E'}^s| + deg_U(W_S) -\rho deg(W_s)}\\
= & |V_{E'}| + deg_U(V\setminus U) - \max_{W_s\subseteq \overline{U}}
{e(W_s) + e(U,W_S) -\rho deg(W_s)} (\mbox{ using Lemma \ref{lemma:allins}})\\
\end{eqnarray*}
Our algorithm ensures that $\rho deg(W_s) \geq e(W_s) + e(U,W_S)$,
which eliminates the possibility of $W_s=\emptyset$. Thus, finding the
minimum $s$-$t$ cut for a fixed $\rho$ in the above flow network is
equivalent to finding a set $W_s$ with the largest degree
density. Thus we have
\[
\frac{e(W_s) + e(U,W_s)}{deg(W_s)} \geq \rho
\]
Since our algorithm finds such $W_s$ for each possible fraction that
$\rho$ can assume and returns the $W_s$ with the highest degree
density, our solution is optimal.
\end{proof}

\subparagraph*{Acknowledgements}
We thank V.\ Chakravarthy for introducing the \kmin problem to us. We
also thank V.\ Chakravarthy and S.\ Roy for useful discussions.

\bibliographystyle{abbrv}

\begin{thebibliography}{12}
\bibitem{AC09}
\newblock R. Andersen and K. Chellapilla.
\newblock WAW,
\newblock 25-36 2009.

\bibitem{BCCFV}
\newblock  A. Bhaskara and
M. Charikar and
E. Chlamtac and
U. Feige and
A. Vijayaraghavan.
\newblock Detecting high log-densities: an {O}$(n^{1/4})$
approximation for the Dense $k$-subgraph problem.
\newblock STOC, 201-2010, 2010.

\bibitem{CMNRS}
\newblock V  Chakravarthy and N. Modani and S.Natarajan and
               S.Roy and
               Y. Sabharwal.
\newblock Density Functions subject to a Co-Matroid Constraint.
\newblock Foundations of Software Technology and Theoretical Computer Science,
\newblock 236-248, 2012.

\bibitem{CS08}
\newblock T. Carnes and D. Shmoys.
\newblock Primal-Dual Schema for Capacitated Covering Problems.
\newblock IPCO, 288-302 2008.

\bibitem{FKP}
\newblock U. Feige and G. Kortsarz and D. Peleg.
\newblock The dense $k$-subgraph problem.
\newblock Algorithmica, 29(3):401-421: 2001.

\bibitem{G84}
\newblock H. Goldberg.
\newblock Finding a Maximum Density Subgraph,
\newblock EECS Department, University of California, Berkeley.
\newblock 1984.

\bibitem{GH97}
O. Goldschmidt and D. Hochbaum.
\newblock k-edge Subgraph Problems.
\newblock Discrete Applied Mathematics,
\newblock 2:159--169: 1997.

\bibitem{GS12}
\newblock A.  Gajewar and A. Sarma.
\newblock Multi-skill collaborative teams based on densest subgraphs.
\newblock SDM, 165-176, 2012.

\bibitem{K06}
\newblock S. Khot.
\newblock Ruling Out PTAS for Graph Min-Bisection, Dense k-Subgraph,
               and Bipartite Clique.
\newblock SICOMP, 36(4)1025-1071:2006.

\bibitem{KS09}
\newblock S. Khuller and B. Saha.
\newblock ICALP, 
\newblock 597-608, 2012.

\bibitem{Lawler}
\newblock E. Lawler.
\newblock Combinatorial optimization: networks and matroids.
\newblock Holt publisher, 1976.


\bibitem{dany}
\newblock D. Marx.
\newblock Parameterized Complexity and Approximation Algorithms.
\newblock J. Comput, 51(1):40-78:2008.

\end{thebibliography}

\end{document}